\date{}
\newtheorem{theorem}{Theorem}[section]
\newtheorem{property}{Property}[section]
\newtheorem{definition}{Definition}[section]
\newenvironment{proof}[1][Proof]{\begin{trivlist}
\item[\hskip \labelsep {\bfseries #1}]}{\end{trivlist}}
\newcommand{\qed}{\nobreak \ifvmode \relax \else
      \ifdim\lastskip<1.5em \hskip-\lastskip
      \hskip1.5em plus0em minus0.5em \fi \nobreak
      \vrule height0.75em width0.5em depth0.25em\fi}
\begin{document}

\title{d-Hop Dominating Set for Directed Graph with in-degree Bounded by One}

\author{Joydeep Banerjee, Arun Das, and Arunabha Sen \\
School of Computing, Informatics and Decision System Engineering\\
\small Arizona State University, Tempe, Arizona 85287 \\
\small Email: \{joydeep.banerjee, arun.das, asen\}@asu.edu}

\maketitle

\begin{abstract}
Efficient communication between nodes in ad-hoc networks can be established through repeated cluster formations with designated \textit{cluster-heads}. In this context minimum d-hop dominating set problem was introduced for cluster formation in ad-hoc networks and is proved to be NP-complete. Hence, an exact solution to this problem for certain subclass of graphs (representing an ad-hoc network) can be beneficial. In this short paper we perform computational complexity analysis of minimum d-hop dominating set problem for directed graphs with in-degree bounded by $1$. The optimum solution of the problem can be found polynomially by exploiting certain properties of the graph under consideration. For a digraph $G_{D}=(V_D,E_D)$ an $\mathcal{O}(|V_D|^2)$ solution is provided to the problem. 

\end{abstract}

\section{Introduction}
An Ad-hoc network is characterized by a set of dynamic nodes communicating through wireless links. Certain nodes in an ad-hoc network (namely \textit{cluster-heads} and \textit{gateways}) are elected to form a backbone \cite{baker1981architectural}  which support efficient inter-communication in between nodes. All the messages generated by nodes inside a cluster is routed via a cluster-head. Gateway nodes function as an entity for inter-cluster communication. Pertaining to the dynamics of the nodes in the network, certain nodes in a cluster might go out of reach from its cluster-head. This necessitates continuous re-election of cluster-heads.

In defining a cluster, the ad-hoc networks needs to be represented in form of a digraph $G_D=(V_D,E_D)$. The vertex set $V_D$ consists of all nodes in the network. The edge set $E_D$ consist of ordered pairs. An edge $(v_1,v_2)$ exists if a node $v_2$ is in the wireless range of a node $v_1$. A node $v_2$ is d-hops away from node $v_1$ if there exists a path $v_1 \rightarrow v_{p1} \rightarrow ... \rightarrow v_{p(d-1)} \rightarrow v_2$ in the graph $G_D$ with $d$ being the smallest such integer. In \cite{amis2000max}, a cluster formation problem is proposed in which a node belonging to a cluster is either a cluster-head or is at most d-hops away from the cluster. This is referred to as the minimum \textit{d-hop dominating set problem }. The problem is proved to be NP complete. In \cite{amis2000max} only bidirectional links between two nodes (i.e. a network represented by an undirected graph) is considered. In this paper the problem is redefined for digraphs to achieve the scope of having both unidirectional and bidirectional links. For a bidirectional link between nodes $v_1$ and $v_2$ a pair of directed edges $(v_1,v_2)$ and $(v_2,v_1)$ is included in the edge set $E_D$ of the graph $G_D$. With these definitions the decision version of minimum d-hop dominating set problem for digraphs is stated as follows: \\ \\
\textbf{Instance: }A directed graph $G_D=(V_D,E_D)$, two positive integers d and K. \\ \\
\textbf{Question: }Is there a subset $|V_D^{'}| \subseteq V_D$ with $|V_D^{'}| \le K$ such that every vertex $v \notin V_D-V_D^{'}$ is at most $d$ hops away from at least one vertex in $V_D^{'}$. 

In a directed graph $G_D=(V_D,E_D)$ a node $u$ dominates a node $v$ if the edge $(u,v) \in E$. For directed graphs, dominating set is proved to NP complete \cite{albers2004algorithms}. But for directed graphs with in-degree of at most $1$, the problem can be solved polynomially \cite{albers2004algorithms}. With a similar context, in this short paper we analyze the minimum \textit{d-hop dominating set problem} with a restricted sample space. The restriction is imposed on the graph $G_D$ which has its in-degree bounded by $1$. We show the existence of a polynomial time algorithm of the restricted version for the problem by exploiting certain properties of the graph under consideration. This is beneficial for attaining exact solution to all cluster-head re-election phases where the graph has the given property. The solution also has application to approach a subclass of particular problem in interdependent network \cite{Das}. To best of our knowledge, all subsequent works which define a cluster as in \cite{amis2000max} or any other analysis of d-hop dominating set problem doesn't approach this restricted version. 

\section{Analysis of Restricted Version of the Problem}
Certain properties and definitions of a directed graph with in-degree bounded by one (denoted as the graph $G_{D}=(V_{D},E_{D}$)) are stated before the analysis of the problem. 

\begin{property}
A weakly connected subgraph $G_{SD}$ of the directed graph $G_{D}$ has the following properties --- 1) $G_{SD}$ can't have more than one cycle (shown in Figure 1(a)). If there exists more than one cycle then at least one node has an in-degree greater than 1. 2) Owing to (1) no cycle of $G_{SD}$ can have a subgraph whose vertices and edges form another cycle (shown in Figure 1(b)). 3) Additionally all nodes of a cycle has no directed incoming edge from another node not in the cycle (shown in Figure 1(c)). 
\end{property}
 Hence based on \textit{Property 2.1}, the subgraph $G_{SD}$ is either a Directed Acyclic Graph (i.e. a DAG with in-degree of all nodes bounded by 1) or has one cylce with all directed edges going out from nodes in the cycle.

\begin{figure}[ht]
  \centering
    \includegraphics[width=0.6\textwidth]{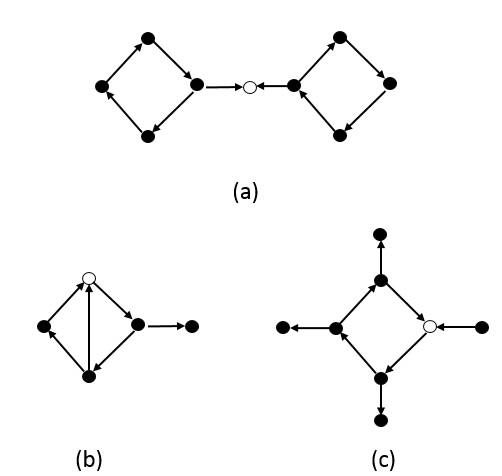}
  \caption{Figures showing weakly connected graphs (a) with two cycles, (b) with a cycle and a subgraph of the cycle which also forming a cycle, (c) cycle having a node which has an incoming edge. All the graphs have at least one node with in-degree greater than one (marked as white) }
\label{fig:fig1}
\end{figure}

\begin{property}
The number of directed edges of the graph $G_{D}$ is upper bounded by $|V_D|$ as all nodes have in-degree bounded by $1$.
\end{property}

\begin{property}
 If the subgraph $G_{SD}$ (as in \textit{Property 2.1}) is a DAG then there exists exactly one node which has an in-degree of zero (otherwise there would be a cycle). Such a node is referred as the \textbf{root node}.  
\end{property}

\begin{definition}
A \textbf{leaf node of a weakly connected component} is defined as the node with an incoming edge and no outgoing edge. 
\end{definition}

\begin{definition}
The \textbf{distance} of a vertex $v$ in a weakly connected component is defined as the number of hops by which $v$ is away from the root node if the graph is a DAG. If the weakly connected component has a cycle then the distance of vertex $v$ is defined as the number of hops by which $v$ is away from the closest node in the cycle. All vertices in the cycle has a distance value of $0$.
\end{definition}

\begin{definition}
An \textbf{isolated strongly connected component} is defined as the component which is not a subgraph of any weakly connected component of the graph $G_{D}$ 
\end{definition}

\begin{property}
Each isolated strongly connected subgraph of the directed graph $G_{D}$ is just a single cycle(follows directly from \textit{Property 2.1}). 
\end{property}

\begin{property}
The graph $G_D$ consists of weakly connected components and isolated strongly connected components as given by \textit{Property 2.1} and \textit{2.4} respectively. 
\end{property}

Exploiting the properties of the graph $G_{D}$ an algorithm (Algorithm 1) is designed to solve the restricted minimum \textit{d-hop dominating set problem for directed graphs}. The proof of optimality and time complexity analysis of the algorithm are provided in Theorem 2.1 and Theorem 2.2 respectively. \\
\begin{algorithm}[ht]
\small
	\KwData{A directed graph $G_D=(V_D,E_D)$ with in-degree bounded by one and a positive integer $d$
		}
	\KwResult{minimum d-hop dominating set $D$ for the graph $G_D$
		}
	\Begin{			
			Set $D=\emptyset$\;		
			Compute all weakly connected components and isolated strongly connected components of the graph $G_{D}$\;
			\For  {(Each weakly connected component $G_{W}=(V_{W},E_{W})$ of graph $G_{D}$)}{
					Set of covered vertices $S = \emptyset$ \;
     				\While{ ($S \ne V_{W}$) } {
      					Pick the node $u \in V_{W} / S$ in $G_{W}$ having the highest distance. If there exists more than one node then pick arbitrarily\; \vspace{0.03in}
      					\If{If the distance value of node $u$ is $0$ and $V_{W} / S \ne \emptyset$}{
      						Form individual graphs $G' = (V',E')$ with $V' \ in V_{W} / S $ composed of each connected component \;
      						Merge two connected components in vertices in $V_{W}$ can be used to merge them and the merging uses $ < d$ vertices \;
      						Select vertices from the merged components and add them to $D$ such that all vertices in $V_{W} / S$ are d-hop dominated and the number of selected vertices is minimized \;
      						Break \;
      					}
    					Include node $u$ in set $D$ such that the number of hops from $u$ to $v$ is
 				           maximum but is less than or equal to $d$\;\vspace{0.03in}
   						Update set $S$ by including vertex $u$ along with all other vertices that are $d$ hop dominated by $v$\;
					}
			}
   \For  {(Each isolated strongly connected component $G_{S}=(V_{S},E_{S})$ of graph $G_{D}$)} {
     Include $\lceil\frac{|V_S|}{d+1}\rceil$ nodes in set $D$ with number of nodes between each vertex picked being $\ge 1$ and $\le d$ \;
    }
	}		
\caption{Algorithm for finding minimum d-hop dominating set of graph $G_{D}$}
\end{algorithm}
\begin{theorem}
Algorithm 1 gives the optimum solution of d-hop dominating set problem for the graph $G_{D}$.
\end{theorem}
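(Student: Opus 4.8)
The plan is to exploit the structural decomposition of $G_D$ guaranteed by Properties 1 and 2. The first step is to reduce the global problem to the components. Because every edge lies inside a single weakly connected component (or an isolated strongly connected component) and domination propagates only along directed paths, no chosen node can ever dominate a node of a different component; hence any $d$-hop dominating set of $G_D$ is a disjoint union of $d$-hop dominating sets of the individual components, and its size is the sum of the per-component sizes. It therefore suffices to show that Algorithm 1 produces an optimal set on each component separately, and the global optimality follows by summing.

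For a weakly connected component that is a DAG (which, under indegree $\le 1$, is an arborescence with all edges directed away from the unique root) — and more generally for the tree portion hanging off a cycle — I would establish optimality of the farthest-leaf rule by an exchange argument combined with induction on $|V_W|$. The central lemma is this: let $v$ be a farthest leaf and let $u$ be the node the algorithm selects, namely the deepest ancestor of $v$ that can still dominate it. In any optimal solution some node $w$ on the unique directed path to $v$ must dominate $v$, and $u$ is necessarily an ancestor of $w$. The crucial use of the \emph{farthest}-leaf hypothesis is that every node dominated by $w$ then lies at depth at most that of $v$, so a short distance computation shows each such node is also within range of the higher node $u$; thus $u$ dominates a superset of $w$'s coverage. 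Replacing $w$ by $u$ yields another optimal solution containing the algorithm's choice. Deleting all nodes dominated by $u$ (together with $u$) leaves a graph that I would verify is again an arborescence or a cycle-with-trees, so the induction hypothesis applies to the residual component and closes the argument.

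For a component that contains a cycle, the leaf-stripping loop removes the tree branches first and eventually reduces $G_W$ to its isolated cycle, at which point the algorithm — exactly as in the final loop for genuinely isolated strongly connected components — places $\lceil |V_S|/d \rceil$ dominators with $\lceil |V_S|/d \rceil - 1$ of them spaced exactly $d$ hops apart. I would prove this is optimal by a packing lower bound: on a directed cycle each chosen node dominates only a contiguous arc of at most $d$ nodes (counting itself), so covering all $n$ cycle nodes forces at least $\lceil n/d \rceil$ dominators, and the spaced construction attains this bound. Combining the arborescence case, the cycle case, and the decomposition of the first paragraph yields the theorem.

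The hard part will be the exchange lemma. The delicate points are (i) confirming that the farthest-leaf property genuinely forces $u$'s domination region to contain $w$'s, rather than merely overlapping it, and (ii) checking that the residual graph after each deletion remains inside the class on which the induction is run — in particular the interface where stripping the trees must leave behind a clean cycle to which the counting argument can be applied. Getting these two invariants exactly right, rather than the routine summation over components, is where the proof must concentrate its care.
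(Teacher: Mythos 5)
Your component decomposition and the exchange-plus-induction argument for arborescences are sound, and for pure trees your plan is actually more careful than the paper's own proof, which only asserts in prose that the farthest-leaf choice can be exchanged into an optimum. (One wording slip: the algorithm's node $u$ is the \emph{highest} ancestor of $v$ within range, not the deepest; your later reasoning treats it correctly.) The distance computation you allude to is exactly $\mathrm{dist}(u,x)=\mathrm{dist}(u,w)+\mathrm{dist}(w,x)\le \mathrm{dist}(u,w)+\mathrm{dist}(w,v)=\mathrm{dist}(u,v)\le d$, valid because a farthest leaf $v$ bounds the depth of every node in $w$'s subtree; the same inequality shows the entire subtree below $u$ gets deleted, so the residual stays an arborescence and the induction closes.

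However, the two ``delicate points'' you deferred are not technicalities: both fail, and with them the theorem itself. First, your packing bound for cycles is off by one. On a directed $n$-cycle a chosen node dominates itself \emph{and} its $d$ successors, a contiguous arc of $d+1$ nodes, so packing yields only the lower bound $\lceil n/(d+1)\rceil$, which the algorithm's $\lceil n/d\rceil$ does not meet: on the directed triangle with $d=2$, a single node dominates all three vertices, yet the algorithm (and the paper's unproved assertion that $\lceil n/d\rceil$ nodes are necessary) uses two. This is not repairable, since it is a counterexample to the statement, not merely to the proof. Second, the exchange containment collapses once the farthest leaf's ancestors wrap onto the cycle: a dominator $w$ between $u$ and $v$ can then reach branches hanging off cycle vertices that lie more than $d$ hops from $u$, because depth no longer controls those distances. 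Concretely, take the $10$-cycle $c_0\to c_1\to\cdots\to c_9\to c_0$ with branches $c_0\to a_1\to a_2\to a_3$ and $c_2\to b_1\to b_2$, and $d=3$: the algorithm's first selection (made for the farthest leaf $a_3$) deletes a region that severs the cycle and strands $b_2$, and it ends up with four nodes, whereas $\{c_0,c_2,c_6\}$ is a $3$-hop dominating set of size three; this also refutes the paper's claim that farthest-leaf deletion never disconnects the residual graph. So your plan --- which is faithfully the paper's plan --- cannot be completed; carried out honestly, it produces counterexamples to the theorem rather than a proof of it.
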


\begin{proof}
For an isolated strongly connected component $G_{S}=(V_{S},E_{S})$ at least $\lceil\frac{|V_{S}|}{d + 1}\rceil$ nodes has to be included in the solution. Algorithm 1 selects $\lceil\frac{|V_{S}|}{d + 1}\rceil$ nodes for each strongly connected component with number of nodes between each vertex picked being $\ge 1$ and $\le d$. Thus it includes the optimum number nodes in the solution for all strongly connected components. For a weakly connected component all the vertices included in the solution with respect to the node having highest distance value $>0$ is optimal. This can be proved by contradiction. If a node is included in the solution that does not $d-hop$ dominate the node having current highest distance value at any iteration then another vertex needs to be included in the solution to dominate it. Hence the cardinality of the solution set would increase. After a certain number of iterations in the \textit{while} loop (line $6 - 13$) the nodes that are not $d$-hop dominated (if exist) would essentially be the nodes in the cycle. So they would have a distance value of $0$ and the algorithm would enter the computation steps as in line $8-11$. Each graph formed in line $9$ would essentially be a path graph. Two graphs $G'_1$ and $G'_2$ are merged using vertices inside the cycle and \textit{iff} the number of vertices to include is $<d$. This ensures when a vertex in $G'$ is selected in solution it would take into account the vertices it can dominate in $G''$. Selecting the minimum number of vertices that finds the $d$-hop dominating set of these merged components is straightforward and is easily seen to be optimal. Hence Algorithm 1 returns an optimal solution to the $d$-hop dominating set problem with in-degree bounded by $1$.
\end{proof}
\begin{theorem}
Algorithm 1 solves d-hop dominating set problem for the graph $G_{D}$ polynomially with time complexity of order $\mathcal{O}(|V_D|^{2})$ 
\end{theorem}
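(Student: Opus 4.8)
The plan is to bound the running time of Algorithm 1 by accounting separately for the one-time preprocessing and for the per-iteration work inside the two loops, then summing over all components. Throughout I write $n = |V_D|$ and $m = |E_D|$, and I use the structural fact that, since every vertex has indegree at most one, $m \le n$; hence any bound of the form $\mathcal{O}(n + m)$ simplifies to $\mathcal{O}(n)$ and any product $\mathcal{O}(nm)$ to $\mathcal{O}(n^2)$.

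First I would bound the preprocessing. Computing all weakly connected components and isolated strongly connected components is a standard decomposition: the weakly connected components are obtained by a single traversal of the underlying undirected graph, and by Property 1 and Property 2 each such component is either a DAG or contains exactly one cycle, so detecting the cycle (and thereby recognizing an isolated strongly connected component) costs no more than one additional traversal per component. This entire step runs in $\mathcal{O}(n + m) = \mathcal{O}(n)$ time.

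Next I would analyze a single pass of the \texttt{while} loop for a weakly connected component $G_W = (V_W, E_W)$. Each pass performs three operations: locating the farthest leaf node $v$, locating the dominator $u$ at maximum hop-distance strictly less than $d$ from $v$, and deleting every vertex within $d$ hops of $u$. Each of these is realizable by a bounded number of breadth-first traversals of the current graph, costing $\mathcal{O}(|V_W| + |E_W|)$ per pass. The crucial accounting step is to bound the number of passes: every pass deletes at least the vertex $u$ together with the vertices it dominates, so at least one vertex is removed per pass and no vertex is ever reinserted; hence the total number of passes over the lifetime of one component is at most $|V_W|$. Multiplying, the work charged to $G_W$ is $\mathcal{O}(|V_W|(|V_W| + |E_W|))$.

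Finally I would sum over all components and add the cost of the second \texttt{for} loop. The isolated strongly connected component loop selects $\lceil |V_S|/d \rceil$ nodes with the required spacing by a single walk around the cycle, costing $\mathcal{O}(|V_S|)$ per component and $\mathcal{O}(n)$ in aggregate. Combining the weakly connected component bound with $\sum |V_W| \le n$ and $|E_W| \le |V_W|$ yields a total running time of $\mathcal{O}(n^2)$, establishing the claimed polynomial bound. I expect the main obstacle to be the iteration count: one must verify that recomputing the farthest leaf from scratch on every pass does not conceal a superpolynomial blowup, which I would settle with the monotone-deletion argument above, since each pass strictly shrinks $G_W$, the number of passes is linearly bounded and the overall cost is genuinely quadratic rather than worse.
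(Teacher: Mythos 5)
Your proposal is correct, and it is worth contrasting with the paper's analysis because, while both follow the same skeleton (preprocessing cost plus iterations times per-iteration cost), your accounting is genuinely different and yields a strictly sharper bound. The paper runs Tarjan's algorithm for the strongly connected components, finds weakly connected components by testing pairwise path existence at $\mathcal{O}(|V_{D}|+|E_{D}|)$ per pair, charges $\mathcal{O}(|V_{D}|^{2})$ for each farthest-leaf computation, and then multiplies the $\mathcal{O}(|V_{D}|)$ iterations of the outer \emph{for} loop by the $\mathcal{O}(|V_{D}|)$ iterations of the inner \emph{while} loop and by the per-pass cost, arriving at $\mathcal{O}(|V_{D}|^{4})$. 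You instead (i) exploit the structural fact that indegree at most one forces $|E_{D}| \le |V_{D}|$, which the paper never uses; (ii) obtain the component decomposition by a single traversal of the underlying undirected graph rather than by pairwise path queries; (iii) implement each pass of the \emph{while} loop (farthest leaf, dominator, deletion) by linear-time breadth-first searches, which is legitimate since each weakly connected piece of an indegree-one graph has a unique root or a unique cycle to search from; and (iv) amortize correctly: each pass deletes at least the vertex $u$, so a component $G_{W}$ sees at most $|V_{W}|$ passes, and summing $|V_{W}|(|V_{W}|+|E_{W}|)$ over components with $\sum_{W}|V_{W}| \le |V_{D}|$ gives $\mathcal{O}(|V_{D}|^{2})$ overall. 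Since the theorem as stated only asserts polynomial time (its big-O exponent is in fact left blank in the paper, and the paper's own proof fills it in as $|V_{D}|^{4}$), your quadratic bound proves the theorem and improves on it; what the paper's looser analysis ``buys'' is only that it tolerates naive implementations of every subroutine, whereas your bound depends on the linear-time implementations you specify, and you justify those adequately.
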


\begin{proof}
All strongly connected components of the graph $G_D$ can be found using Tarjan's algorithm \cite{tarjan} in $\mathcal{O}(|V_{D}|+|E_{D}|) = \mathcal{O}(|V_{D}|)$ (as $|E_D| \le |V_D|$). The isolated strongly connected components can be separated by checking the out-degree of all nodes in a strongly connected components which should be exactly equal to $1$. This is done in $\mathcal{O}(|V_{D}|)$. All other components form the weakly connected components. Hence step 3 takes $\mathcal{O}(|V_{D}|)$. The first \textit{for} loop computes the minimum d-hop dominating set for all weakly connected components. In worst case all the vertices of the graph $G_D$ belongs to some weakly connected components. Consider that there are $m$ weakly connected components (with $m \le |V_D|$) and the graphs $G_1=(V_1,E_1)$, ..., $G_m=(V_m,E_m)$ represent the components. Accordingly, the \textit{for} loop in step 4 iterates for $m$ times. In each iteration of the \textit{while} loop in step 6, at least $1$ nodes is included in $S$. Hence, in $i^{th}$ iteration of the \textit{for} loop, the while loop iterates for at most $|V_i|$ times. For a given weakly connected graph $G_x=(V_x,E_x)$ in the $x^{th}$ iteration of the \textit{for}, the distance of the nodes (in step 6) can be found in $\mathcal{O}(|V_x|+|E_x|)=\mathcal{O}(|V_x|)$ (considering the fact $|E_x| \le |V_x|$ and the distance values for nodes in any component has to be computed exactly once). The computations in step $13-14$ can then be done in $\mathcal{O}(|V_x|)$ . Similarly the computations in the branch of lines $9-11$ can be done in $\mathcal{O}(|V_x|)$. So the running time of the while loop is $\mathcal{O}(|V_x|^{2})$ considering the while loop iterates for the maximum number of times. In overall the running time of the \textit{for} loop is bounded by $max_{m} (\lceil \sum_{i = 1}^{m} |V_{D_m}|^2 \rceil) = \mathcal{O}(|V_D|^2)$ where $V_{D_m}$ are vertices in the weakly connected component at $m^{th}$ iteration.

The second \textit{for} loop starting in step 15 computes the minimum d-hop dominating set for all isolated strongly connected component. Again, in the worst case all vertices of the graph $G_D$ belongs to some isolated strongly connected component. Similarly consider that there are $m$ components (with $m \le \frac{|V_D|}{2}$) and the graphs $G_1=(V_1,E_1)$, ..., $G_m=(V_m,E_m)$ represent the components. For an iteration $i$ of the \textit{for} loop the computation time of step 19 is bounded from above by $|V_i|$. Hence with $m$ iteration of the for loop the computation complexity in steps $18-19$ is upper bounded by $\overset{m}{\underset{i=1}{\sum}}|V_i| = \mathcal{O}(|V_{D}|)$. So the total time complexity for finding minimum d-hop dominating set for isolated strongly connected components as in step $18-19$ is $\mathcal{O}(|V_{D}|)$. Hence Algorithm 1 takes in total an $\mathcal{O}(|V_{D}|^2)$ time to compute the optimum solution of the problem. 
\end{proof}

\section{Conclusion}
In this short paper we analyze the minimum d-hop dominating set problem for directed graph with in-degree bounded by one. It is found that exploiting certain properties of the graph under consideration an algorithm can solve the problem in polynomial time, with run time complexity bounded by two times the number of vertices in the graph. This result can be used in any application (as in \cite{Das}), apart from cluster-head election in wireless ad-hoc networks, where 1) the problem can be formulated as d-hop dominating set 2) the underlying graph has in-degree of all nodes bounded by one.

\end{document}